\documentclass[11pt]{article}

\newif\ifanonymous
\anonymousfalse

\usepackage[margin=1in]{geometry}
\usepackage{amsmath,amssymb,amsthm}
\usepackage{algorithm}
\usepackage{algpseudocode}
\usepackage{mathtools}
\usepackage{enumitem}
\usepackage{microtype}
\usepackage{tikz}
\usepackage{xspace}
\usepackage{hyperref}

\usetikzlibrary{arrows.meta,positioning,fit,calc,backgrounds,decorations.pathreplacing}

\definecolor{introblue}{RGB}{55,126,184}
\definecolor{tokorange}{RGB}{230,159,0}
\definecolor{exchgreen}{RGB}{0,158,115}

\hypersetup{
  colorlinks=true,
  linkcolor=blue,
  citecolor=blue,
  urlcolor=blue
}

\newtheorem{theorem}{Theorem}
\newtheorem{lemma}[theorem]{Lemma}

\newtheorem{corollary}[theorem]{Corollary}
\theoremstyle{definition}
\newtheorem{definition}[theorem]{Definition}
\newtheorem{example}[theorem]{Example}

\newcommand{\OPT}{\mathrm{OPT}}
\newcommand{\cost}{\mathrm{Cost}}
\newcommand{\E}{\mathbb{E}}
\newcommand{\Prb}{\mathbb{P}}
\newcommand{\1}{\mathbf{1}}

\newcommand{\Z}{\mathcal{Z}}
\newcommand{\tw}{\tilde w}

\algrenewcommand\algorithmicrequire{\textbf{Input:}}
\algrenewcommand\algorithmicensure{\textbf{Output:}}
\algnewcommand{\algorithmicfixed}{\textbf{Fixed parameters:}}
\algnewcommand{\Fixed}{\item[\algorithmicfixed]}
\algtext*{EndFor}
\algtext*{EndIf}

\ifanonymous
\newcommand{\authors}{author(s)\xspace}
\else
\newcommand{\authors}{author\xspace}
\fi

\title{Transposition is Nearly Optimal for IID List Update}
\author{
\ifanonymous
\else
Christian Coester\\
University of Oxford
\fi
}
\date{}

\begin{document}
\maketitle

\begin{abstract}
The list update problem is one of the oldest and simplest problems in online algorithms: A set of items must be maintained in a list while requests to these items arrive over time. Whenever an item is requested, the algorithm pays a cost equal to the position of the item in the list. In the i.i.d. model, where requests are drawn independently from a fixed distribution, the static ordering by decreasing access probabilities $p_1\ge p_2\ge \dots \ge p_n$ achieves the minimal expected access cost $\OPT=\sum_{i=1}^n ip_i$. However, $p$ is typically unknown, and approximating it by tracking access frequencies creates undesirable overheads.

We prove that the \emph{Transposition} rule (swap the requested item with its predecessor) has expected access cost at most $\OPT+1$ in its stationary distribution. This confirms a 50-year-old conjecture by Rivest up to an unavoidable additive constant. More abstractly, it yields a purely memoryless procedure to approximately sort probabilities via sampling. Our proof is based on a decomposition of excess cost, and its technical core is a ``sign-eliminating'' combinatorial injection to witness nonnegativity of a constrained multivariate polynomial.
\end{abstract}

\ifanonymous
\pagenumbering{gobble}
\newpage
\pagenumbering{arabic}
\fi

\section{Introduction}

Maintaining frequently accessed objects near the front of a list is a ubiquitous primitive: linked-list dictionaries, symbol tables, rule lists, and hash-table buckets all reduce to repeatedly searching a linear list and occasionally reorganizing it. The \emph{list update} problem formalizes this: $n$ items must be arranged in a list. At each time step, one of the items is requested, incurring cost equal to the position of the item in the list. After each request, the algorithm is allowed to rearrange the list. Moving the requested item closer to the front of the list is usually considered free, while any other exchanges would incur a cost of 1.

The problem has been studied for over six decades, going back to McCabe~\cite{McCabe65}. Most of the early work on the problem considered the \emph{i.i.d. model}, where requests at different time steps are sampled independently from the same distribution. Later, it was also studied in the adversarial setting, and became foundational to the area of competitive analysis~\cite{SleatorT85}. We focus here on the original i.i.d. setting~\cite{McCabe65}. Our main result is that the Transposition rule (swap the requested item with its predecessor) is nearly optimal, confirming a 50-year old conjecture by Rivest~\cite{Rivest76} up to an unavoidable additive constant.

Denote by $p_i$ the probability of requesting item $i$, and assume without loss of generality that items are labeled such that $p_1\ge p_2\ge\dots\ge p_n$. Clearly, the optimal policy is to maintain the static ordering by decreasing probabilities, which achieves the minimal expected access cost of
\begin{align*}
\OPT=\sum_{i=1}^n i p_i.    
\end{align*}
This strategy presumes exact knowledge of the access probabilities, which is typically unavailable. A natural alternative is to order items by their observed request frequencies, which converges asymptotically to the same expected cost. However, as noted by Knuth \cite{Knuth98} and Rivest~\cite{Rivest76}, the extra memory space required for maintaining frequency counters would be undesirable as it could be better used for alternative nonsequential search techniques.

Research has therefore focused on \emph{memoryless (self-organizing)} heuristics, which do not maintain any memory beyond the current ordering of items. The two rules that have been studied by far the most extensively since the 1960s are \emph{Move-to-Front} and \emph{Transposition}:

\begin{description}
    \item[Move-to-Front:] Move the requested item to the front of the list, without changing the relative order of other items.
    \item[Transposition:] Swap the requested item with its immediate predecessor (unless it is already at the front).
\end{description}

If requests are adversarial, then Move-to-Front is known to be superior, achieving a constant competitive ratio whereas the competitive ratio of Transposition is unbounded \cite{SleatorT85}. Indeed, consider alternately requesting the last two items of the list; then Transposition incurs a cost of $n$ per time step, whereas an optimal algorithm could bring those two items to the front. Move-to-Front is also known to perform better on request sequences exhibiting strong locality of reference~\cite{HesterH85,AngelopoulosDL08,AlbersL16}.

The situation reverses in the i.i.d. setting: For any distribution $p$, any memoryless rule induces a Markov chain whose states are the permutations of the items. The rules are then typically evaluated by their \emph{stationary} expected cost, capturing long-run behavior when the access law is stable. For Move-to-Front, \cite{ChungHS88} showed that its stationary expected cost is within a factor $\frac{\pi}{2}\approx 1.57$ of $\OPT$ for any distribution $p$, and this is tight when $p_i\propto 1/i^2$~\cite{GonnetMS81}. Rivest~\cite{Rivest76} showed that Transposition has stationary expected cost less than or equal to Move-to-Front, with equality only in trivial cases ($n\le 2$ or $p_i=1/n$). Thus, Transposition is preferable in the i.i.d. setting, or if the distribution evolves only slowly. Transposition has also been noted for its efficient implementation using arrays, since swapping adjacent entries is very fast. This avoids the overhead of maintaining pointers as needed for linked list implementations, which are typically used for Move-to-Front~\cite{HesterH85}.

A 50-year-old conjecture by Rivest~\cite{Rivest76} states that Transposition is optimal among \emph{memoryless} rules for \emph{any} distribution $p$, and Andrew Yao showed that there exist distributions where Transposition is the unique optimal memoryless rule (see~\cite{Rivest76}). The literal version of Rivest's conjecture was shown to be false by a counterexample with $n=6$ items~\cite{AndersonNW82}. This counterexample does not settle the broader conjecture that Transposition is \emph{asymptotically} optimal in a suitable sense, which has continued to attract attention. However, improving performance bounds on Transposition has resisted progress for decades, with advances restricted to specific distributions~\cite{Makjamroen92,GamarnikM05,IndykQRS25}. For example, \cite{GamarnikM05} showed that the logarithm of tail probabilities of the search cost is asymptotically optimal for two families of distributions. Very recently, Indyk, Quaye, Rubinfeld and Silwal \cite{IndykQRS25} showed that the stationary expected cost of Transposition is at most $(1+o(1))\OPT$ as $n\to\infty$ in the case of a Zipfian distribution $p_i\propto 1/i^\alpha$ with $0<\alpha\le 2$. Note that $n\to\infty$ alone does not suffice to achieve ratio $1+o(1)$ in general, as the example in~\cite{AndersonNW82} can be extended to arbitrary $n$ by adding items with vanishing probability. For general distributions, prior to our work the best upper bound on the cost of Transposition was $\frac{\pi}{2}\OPT$, inherited from the according tight bound on Move-to-Front and the aforementioned property that Transposition is no worse than Move-to-Front.

An intuitive reason why Move-to-Front is comparatively easier to analyze is that it maintains a simple structure: at any time, the list is ordered by last request times. In contrast, under Transposition the position of an item is shaped by a more intricate interaction of past requests. The difficulty of analyzing Transposition has been noted explicitly in the literature; for example, Hester and Hirschberg \cite{HesterH85} write:
\begin{quote}
    ``Transpose is widely mentioned in the literature, but authors merely state an inability to obtain theoretical bounds on its performance.''
\end{quote}

The main result of this paper is that Transposition is optimal up to a constant additive term, for \emph{any} distribution:

\begin{theorem}\label{thm:main}
In the i.i.d. model, the stationary expected cost of Transposition is at most $\OPT+1$.
\end{theorem}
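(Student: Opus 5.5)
The plan starts from the known closed form of the Transposition stationary distribution (Rivest): the chain is reversible, and the stationary probability of a list $\sigma=(\sigma_1,\dots,\sigma_n)$ is proportional to $\prod_{k=1}^n p_{\sigma_k}^{\,n-k}$, i.e.\ to $\prod_{i} p_i^{-\mathrm{pos}_\sigma(i)}$. The expected cost in stationarity is $1+\sum_{i\neq j} p_j \Pr[i \text{ before } j]$. Comparing with $\OPT = 1+\sum_{i<j} p_j$, we get the excess-cost decomposition
\begin{align*}
\cost-\OPT=\sum_{i<j}(p_i-p_j)\,\Pr[j\text{ before } i].
\end{align*}
It therefore suffices to show $\sum_{i<j}(p_i-p_j)\Pr_\pi[j \text{ before } i]\le 1$.

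The key step is to bound $\Pr[j\text{ before } i]$ for $p_i>p_j$. Swapping $i$ and $j$ in a list where $j$ precedes $i$ at distance $d$ multiplies the stationary weight by $(p_i/p_j)^{d}\ge p_i/p_j$. This bijection gives $\Pr[j\text{ before } i]\le p_j/(p_i+p_j)$ and hence $(p_i-p_j)\Pr\le p_j(p_i-p_j)/(p_i+p_j)$. This crude bound is not summable to a constant: for uniform-like blocks it sums to order $n$. So the exchange must exploit the distance $d$: pairs that are far apart in the wrong order are penalized geometrically.

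My approach is to split the sum over the gap $d$ between $j$ and $i$. I will write the excess as $\sum_{i<j}(p_i-p_j)\sum_{d\ge1}\Pr[j \text{ at distance } d \text{ before } i]$ and charge each configuration to the items lying between them. Concretely, I aim to prove a pointwise inequality: for each state $\sigma$, the quantity $\sum_{(i,j) \text{ inverted in }\sigma}(p_i-p_j)$ is dominated in expectation by a telescoping quantity such as $\sum_k (p_{\sigma_k}-p_{\sigma_{k+1}})^+$-type terms. Adjacent inversions can be handled by the reversibility identity: the flow across a transposition edge balances, so $\pi(\sigma)p_{\sigma_{k+1}}=\pi(\sigma')p_{\sigma_k}$. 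Summing these identities suitably should make $\E\bigl[\sum_k (p_{\sigma_{k+1}}-p_{\sigma_k})^+\bigr]$ bounded by $\sum p=1$. This is where the constant $1$ should come from.

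The main obstacle is reducing nonadjacent inversions to this bound. After clearing denominators, this reduces to the nonnegativity of a multivariate polynomial in the $p$'s, constrained by the ordering $p_1\ge\dots\ge p_n$, and that polynomial has mixed-sign monomials. My first attempt would be a combinatorial injection: map each negative term (a state $\sigma$ with a particular inverted pair) to a distinct positive term (obtained by a specific sequence of adjacent swaps moving $i$ forward past intermediate items). I would show the weight of the image dominates using $p_i\ge p_j$ together with the ordering constraints, so that every negative monomial is cancelled by a positive one. Designing this injection so it is well defined and weight-increasing for every intermediate item, whether larger or smaller than $p_i$, is the crux. I would start with small $n$ to guess the pattern.
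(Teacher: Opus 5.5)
There is a genuine gap: the proposal stops where the proof actually begins. Your preliminary steps are correct and match the paper. The stationary law $Q(\pi)\propto\prod_i p_i^{n-\pi(i)}$ is Lemma~\ref{lem:stationary}, and the excess decomposition $\sum_{i<j}(p_i-p_j)\Pr[\pi(j)<\pi(i)]$ is Lemma~\ref{lem:inversions}. You are also right that the pairwise bound $\Pr[j\text{ before }i]\le p_j/(p_i+p_j)$ only gives $O(n)$. But the polynomial inequality you call ``the crux'' is the whole content of the theorem, and you give no certificate for it.

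Both mechanisms you sketch also fail as stated.
\begin{itemize}
\item The adjacency term $\sum_k(p_{\sigma_{k+1}}-p_{\sigma_k})^+$ does not dominate the inversion excess pointwise. For the reversed list it telescopes to $p_1-p_n$, while the excess is at least $(n-1)(p_1-p_n)$. It does not dominate in expectation either: as $p$ approaches uniform, the stationary law approaches uniform, so the excess is about $\frac12\sum_{i<j}(p_i-p_j)$ while the adjacency term is about $\frac1n\sum_{i<j}(p_i-p_j)$.
\item Consider an injection from (state, inverted pair) objects carrying $(p_i-p_j)w(\sigma)$ into (state, item) objects carrying $p_l w(\sigma')$. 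For $n\ge 6$ none can exist, by counting alone: there are $n!\binom n2/2$ of the former and only $n\cdot n!$ of the latter.
\end{itemize}
So a finer splitting is forced. Splitting $p_i-p_j$ into monomials in the $p$-variables brings back mixed signs that only the ordering constraint can resolve---exactly the comparison your plan leaves open.

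The paper supplies three ideas that are missing here.
\begin{itemize}
\item \textbf{Per-item budget.} It charges each inversion to its \emph{less likely} item and proves $s_j:=\sum_{i<j}(p_i-p_j)\Pr[\pi(j)<\pi(i)]\le p_j$. The excess is then at most $\sum_{j\ge2}p_j\le1$; this, not adjacency, is where the constant comes from.
\item \textbf{Gap variables.} It substitutes $x_i=p_i-p_{i+1}$ and $x_n=p_n$, which turns the ordered cone into $\mathbb{R}^n_{\ge0}$. Then $p_i-p_j=\sum_{k=i}^{j-1}x_k$ and $p_j=\sum_{k\ge j}x_k$. Each $\prod_l p_l^{n-\pi(l)}$ expands into word tuples $(w_1,\dots,w_n)$ with $w_l\in\Sigma_l^{n-\pi(l)}$. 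The claim becomes nonnegativity of every coefficient, i.e.\ $|A|\le|B|$ for each fixed monomial.
\item \textbf{Letter-level injection.} It builds $A\hookrightarrow B$ on letters rather than states. It truncates $w_j$ to length $L=|w_i|$ and passes the removed letters, plus the deficit letter $k$, through a buffer into the words with lengths in $[L,U-1]$ via receiver, exchange and tail steps. The letter $b_1\in\Sigma_j$ is left over as the new deficit. Injectivity comes from reading off $L=|\tilde w_j|$, and $U$ as the first length above $L$ whose word ends in a letter of $[j-1]$.
\end{itemize}
In permutation terms this map does move $j$ back to $i$'s position and shift the receivers forward, loosely like your swap sequence. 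What makes the terms match exactly is the simultaneous letter bookkeeping, and a state-level swap argument does not provide it.
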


This confirms Rivest's original conjecture up to an additive constant (which is unavoidable due to the example of~\cite{AndersonNW82}). Note that the $+1$ term is multiplicatively negligible whenever the probability mass is sufficiently spread across a superconstant number of items. In particular, the aforementioned $1+o(1)$ multiplicative guarantee of \cite{IndykQRS25} for Zipfian distributions follows immediately from Theorem~\ref{thm:main}, since $\OPT\to\infty$ for the distributions considered there.

While Rivest's original conjecture concerns the performance of Transposition relative to other memoryless rules, Theorem~\ref{thm:main} shows that Transposition is nearly optimal even compared to algorithms \emph{with} memory. In other words, a distribution-oblivious local update rule matches the distribution-aware optimum up to an additive constant, despite maintaining no explicit estimates of the access law. Since the distribution-aware optimum corresponds to the decreasing probability order, an alternative interpretation of Theorem~\ref{thm:main} is that Transposition can be used as a method to approximately \emph{sort} the elements of any probability distribution over a finite domain. Remarkably, this only requires sample access to the distribution, and yields an in-place algorithm without maintaining additional memory such as frequency estimates.

The stationary distribution of Transposition is known to be identical to that of the \emph{gladiator chain}~\cite{HaddadanW19}, a Markov chain on the set of permutations $S_n$ where each element $i\in[n]$ can be thought of as a gladiator with some strength $p_i>0$ (not necessarily summing to $1$), and a permutation $\pi\in S_n$ maps gladiators to ranks: in each step, for $k\in[n-1]$ uniformly at random, gladiators $i=\pi^{-1}(k)$ and $j=\pi^{-1}(k+1)$ with adjacent ranks fight over their position in the ranking.\footnote{Similar dynamics are used to determine rankings in ``bumps'' rowing races.} Gladiator $j$ wins and moves ahead of gladiator $i$ with probability $\frac{p_j}{p_i+p_j}$ (also known as the Bradley-Terry model~\cite{BradleyT52} and studied already by Zermelo in the 1920s~\cite{Zermelo29}). Our proof implies a quantitative guarantee on the stationary ranking: for each gladiator $j$, the expected total excess strength of stronger gladiators ranked below $j$ is at most $p_j$.

\begin{corollary}\label{cor:gladiator}
    In the stationary distribution of the gladiator chain, for all $j\in[n]$,
    \[
    \mathbb{E}\left[\sum_{i\colon \pi(i)>\pi(j)}\max\{p_i-p_j, 0\}\right]\le p_j.
    \]
\end{corollary}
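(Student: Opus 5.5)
The plan is to obtain Corollary~\ref{cor:gladiator} as a by-product of a per-item refinement of Theorem~\ref{thm:main}, using the cited identification of stationary distributions~\cite{HaddadanW19}. \textbf{Step 1: reduction to Transposition.} The stationary law of the gladiator chain coincides with that of Transposition, which has Rivest's product form $\mu(\pi)\propto\prod_k p_k^{-\pi(k)}$. This form is invariant under rescaling $p\mapsto cp$, and the claimed inequality is homogeneous of degree one in $p$. So we may normalise $\sum_k p_k=1$ and work with the Transposition chain for an i.i.d.\ distribution $p$.

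\textbf{Step 2: excess-cost decomposition.} The access cost of an ordering $\pi$ is $1+\sum_{(i,k)\colon \pi(i)<\pi(k)} p_k$, and $\OPT=1+\sum_{i<k}p_k$. Subtracting gives
\[
\E[\cost]-\OPT=\sum_{j}\E\Big[\sum_{i\colon \pi(i)>\pi(j)}\max\{p_i-p_j,0\}\Big].
\]
This identity shows that Corollary~\ref{cor:gladiator} is exactly the per-item bound whose sum over $j$ yields Theorem~\ref{thm:main}, since $\sum_j p_j=1$. I would therefore prove the per-item inequality directly, as the natural route to both statements.

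\textbf{Step 3: the per-item inequality.} Fix $j$. Multiply through by the normaliser $Z=\sum_\pi w(\pi)$, where $w(\pi)=\prod_k p_k^{\,n-\pi(k)}$. The claim becomes nonnegativity of
\[
P_j(p)=\sum_{\pi} w(\pi)\Big(p_j-\sum_{i\colon\pi(i)>\pi(j),\ p_i>p_j}(p_i-p_j)\Big)
\]
on the region $p_1\ge\dots\ge p_n>0$. I would group permutations by the set $S$ of items ahead of $j$ and the set $T$ behind it; the internal orders then factor out as smaller partition functions. The key local fact is that swapping $j$ with a stronger item $i$ at distance $d$ behind it multiplies $w$ by $(p_i/p_j)^d$. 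Hence configurations with $i$ behind $j$ are exponentially less likely by a factor $(p_j/p_i)^d$. The negative contribution $(p_i-p_j)(p_j/p_i)^d$ summed over $d\ge1$ telescopes to roughly $p_j$ times a geometric-type quantity, and this is what should make the budget $p_j$ sufficient.

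\textbf{Step 4: sign elimination (the main obstacle).} The heuristic in Step 3 double counts when several stronger items sit behind $j$, so a global argument is needed. I would expand $P_j$ as a polynomial in the variables $p_i$ and the differences $\delta_i=p_i-p_{i+1}\ge0$. I would then construct an injection from negative monomials (indexed by a permutation together with a chosen stronger $i$ behind $j$) to positive monomials of equal or dominating value. Concretely, a negative monomial would be mapped to the permutation obtained by moving $i$ into $j$'s position and shifting the intervening items, using the ordering constraint to absorb the $(p_i-p_j)$ factor. Making this map injective when many items are displaced, with the monomial weights still matching, is where I expect the real difficulty; I would first check it by hand for $n=3,4$ to identify the right canonical choice of $i$ (for example the nearest stronger item behind $j$).
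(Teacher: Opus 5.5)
Your Steps 1--3 are correct and match the paper's reduction: the gladiator chain has the same stationary law as Transposition, homogeneity lets you normalise, the inversion decomposition identifies the left-hand side with $s_j$, and clearing the normaliser gives \eqref{eq:numerator-nonneg}. The gap is that this polynomial inequality is the entire difficulty, and Step 4 leaves it unproved. You describe an injection you hope to find, but you do not construct one.

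The Step 3 heuristic cannot be turned into a proof. For a single stronger item, $\sum_{d\ge1}(p_i-p_j)(p_j/p_i)^d$ already equals $p_j$, so the heuristic spends the whole budget on each stronger item separately. The concrete map you sketch also cannot work if, as your description suggests, each term $(p_i-p_j)w(\pi)$ is to be dominated by a single term $p_j w(\pi')$. There are $\frac{j-1}{2}\,n!$ pairs $(\pi,i)$ with $i<j$ and $\pi(i)>\pi(j)$. This exceeds the $n!$ target permutations once $j\ge 4$, so no such injection exists.

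The missing idea is to match monomial by monomial, in the gap variables alone. Set $x_i=p_i-p_{i+1}$ and $x_n=p_n$, not a mix of $p_i$'s and differences, and show that every coefficient is nonnegative. Expanding $\prod_l(\sum_{h\ge l}x_h)^{n-\pi(l)}$ multinomially, the coefficient of $x^d$ equals $|B|-|A|$:
\begin{itemize}
\item $B$ is the set of word tuples $(w_1,\dots,w_n)$ with $w_l\in\{l,\dots,n\}^*$, lengths forming a permutation of $\{0,\dots,n-1\}$, and letter counts $d-e_k$ for some $k\ge j$.
\item $A$ is the set of such tuples with deficit $k\in[i,j-1]$ instead, paired with an index $i<j$ such that $|w_i|<|w_j|$.
\end{itemize}

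The paper then builds an explicit injection $A\to B$. It truncates $w_j$ to length $|w_i|$. The removed letters, plus the deficit letter $k$, are used on the words of length in $[|w_i|,|w_j|)$: those with index $\le k$ are lengthened, and those with index in $(k,j)$ have their last letter overwritten. Exactly one removed letter $b_1\ge j$ is left over and becomes the new deficit. The input is recoverable because $|w_i|=|\tw_j|$, and $|w_j|$ is the smallest larger length whose word ends in a letter $<j$.

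Without this, or an equivalent argument that handles all stronger items at once, your proposal establishes only the reduction, not the corollary.
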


In particular, gladiators do not outrank substantially stronger opponents in expectation.

\subsection{Additional Related Work}

Surveys on i.i.d. list update include ~\cite{Hendricks76,HesterH85}. Algorithms have also been analyzed with respect to their rate of convergence to the stationary expected cost. Move-to-Front is known to converge faster than Transposition, and Bitner suggests a hybrid rule that combines the fast convergence of Move-to-Front with the stationary performance of Transposition~\cite{Bitner79}. Transposition and other rules have been evaluated empirically in~\cite{Tenenbaum78,BachrachE97}.

In the adversarial competitive analysis setting, Move-to-Front achieves the optimal competitive ratio of $2$ among deterministic algorithms~\cite{SleatorT85}, and for randomized algorithms the competitive ratio is known to lie between $1.5$ and $1.6$~\cite{Irani91,Teia93,ReingoldWS94,Albers98,AlbersSW95,AmbuhlGS13}. In recent years, the list update problem has been studied under many variants and analysis frameworks, including paid exchanges~\cite{AlbersJ20,BasiakB0CJST25}, algorithms with predictions~\cite{AzarLS25}, advice~\cite{BoyarKLL17}, delays and time windows~\cite{AzarLV24}, itinerant list update~\cite{OlverPSSS18}, locality of reference~\cite{AngelopoulosDL08,AlbersL16}, bijective analysis~\cite{AngelopoulosS13}, and other models surveyed in~\cite{KamaliL13}. A closely related online problem is caching, corresponding to the variant of list update where the access cost is $0$ for the first $k$ items in the list and $1$ otherwise~\cite{SleatorT85}.

The gladiator chain and related Markov chains on the set $S_n$ of permutations have been studied independently, mostly with respect to their mixing time, which Fill~\cite{Fill03} conjectured to be polynomial in $n$. If all gladiators have the same strength, Wilson~\cite{Wilson04} shows it has mixing time $\Theta(n^3\log n)$ (tightening prior bounds from~\cite{DiaconisS81,DiaconisS93}), and \cite{HaddadanW19} show polynomial mixing time for the case of three strength classes. Recent work~\cite{GheissariLV26} shows rapid mixing for a more general model provided the stronger gladiator wins each comparison with probability at least $\frac{1}{2}+\epsilon$. Mixing times of closely related Markov chains were studied in~\cite{BenjaminiBHM05,BhaktaMRS13}.

\subsection{Proof Overview}\label{sec:proofOverview}

We use the classical formula for the stationary distribution of the Transposition Markov chain, $Q(\pi)\propto \prod_{i=1}^n p_i^{n-\pi(i)}$ for $\pi\in S_n$~\cite{Rivest76,Hendricks76}. However, turning this formula into a bound on the expected cost is not straightforward as the search cost depends on many correlated inversion events.

Our starting point is a decomposition of suboptimality into inversions. For any (random) permutation $\pi$, the excess over the optimum $\OPT(p)=\sum_{i=1}^n i\,p_i$ can be written as a weighted inversion sum: each inverted pair of indices $i<j$ with $\pi(j)<\pi(i)$ contributes exactly $p_i-p_j$ to
the excess cost (Lemma~\ref{lem:inversions}).
We then regroup these contributions by the \emph{less likely} item.
For each $j\ge 2$ we define
\[
s_j := \sum_{i<j} (p_i-p_j)\Prb_{\pi\sim Q}[\pi(j)<\pi(i)],
\]
so that $\mathbb{E}_{\pi\sim Q}[\cost(\pi)]-\OPT(p)=\sum_{j=2}^n s_j$. Intuitively, $s_j$ represents the excess cost that can be blamed on item $j$ appearing ahead of higher-probability items in the permutation. As a key step of our proof, we show the per-item inequality $s_j\le p_j$. It immediately implies Theorem~\ref{thm:main}, since then $\sum_{j=2}^n s_j \le \sum_{j=2}^n p_j \le 1$.

To prove $s_j\le p_j$, we plug the formula for $Q(\pi)$ into the definition of $s_j$, which reduces the inequality to nonnegativity of a polynomial in the variables $p_1,\dots,p_n$. A convenient reparameterization is to pass from the probabilities $(p_1,\ldots,p_n)$ to the \emph{gap variables} $x_i:=p_i-p_{i+1}$ (with $x_n:=p_n$). In these variables, the monotonicity assumption $p_1\ge \cdots \ge p_n$ becomes simply $x_i\ge 0$ for all $i$, turning the inequality into nonnegativity of a polynomial on $\mathbb R^n_{\ge 0}$.

At this point, the remaining challenge is combinatorial: we show that \emph{every coefficient} of the resulting polynomial is nonnegative. To do this, we interpret each coefficient as the difference $|B|-|A|$ of the cardinalities of two sets. Here, $B$ is a set of word tuples, and $A$ is a set of (word tuple, index) pairs. We certify nonnegativity by giving an injection from $A$ to $B$ (Algorithm~\ref{alg:injection}).

Constructing this map in a way that is, on the one hand, invertible, and on the other hand, respects the constraints of the sets $A$ and $B$, is where most of the proof complexity resides. Our solution involves shortening one of the input words, and using most of its removed letters plus one special ``deficit'' letter to either insert them directly into ``receiver'' words or to replace a letter in an ``exchange'' word and add the replaced letter to a receiver word. The resulting output word tuple has a letter multiset that differs from the input in a single letter, and allows to recover the input uniquely, as required for injectivity.

\paragraph{Comment on AI usage.} During the exploration of this work, the \authors used the large language model GPT-5 Pro to brainstorm potential proof strategies. The AI generated the idea that the inequality $s_j\le p_j$ might hold, and hypothesized, based on experiments with small $n$, that coefficients of the corresponding polynomial in gap variables appear to be nonnegative. Although the AI was unable to prove these statements (and made no meaningful progress towards constructing the combinatorial injection), these suggestions were essential for motivating the proof approach pursued in this paper.

\paragraph{Organization.}
Section~\ref{sec:preliminaries} defines the model and derives the equation for the stationary distribution. Section~\ref{sec:decomposition} presents a decomposition of excess cost as a sum of $s_j$, and Section~\ref{sec:polynomial} shows that the inequality $s_j\le p_j$ reduces to nonnegativity of a polynomial in gap variables, and derives a formula for the coefficients of the polynomial. Section~\ref{sec:injection} proves nonnegativity of each coefficient by constructing an injection. Section~\ref{sec:conclusion} concludes.

\section{Model and Preliminaries}\label{sec:preliminaries}

\subsection{List Update Under IID Requests}

There are $n$ items labeled $1,\dots,n$.
Requests are i.i.d. draws from a distribution $p=(p_1,\dots,p_n)$, where $p_i\ge 0$ and $\sum_i p_i=1$.
We assume w.l.o.g.\ that $p_1 \ge p_2 \ge \dots \ge p_n$. If some $p_i=0$, those items are (almost surely) never requested and will eventually drift to the end of the list without further affecting the dynamics. We may thus assume w.l.o.g.\ that $p_i>0$ for all $i$.

A \emph{state} is a permutation $\pi\in S_n$; we interpret $\pi(i)\in\{1,\dots,n\}$ as the current position (rank) of item $i$.
The cost of serving a request to item $i$ in state $\pi$ is $\pi(i)$. The optimal ordering is the identity permutation (item $i$ at position $i$), and its expected cost is
\[
\OPT(p) \;=\; \sum_{i=1}^n ip_i.
\]

\subsection{The Transposition Rule as a Markov Chain}

\begin{definition}[Transposition]
Given a state (permutation) $\pi$ and a requested item $i$:
if $\pi(i)=1$ do nothing;
if $\pi(i)=k>1$, swap item $i$ with the item currently at position $k-1$ (the predecessor of $i$).
\end{definition}

Under i.i.d. requests, Transposition induces a time-homogeneous Markov chain on the set $S_n$ of permutations.
By the assumption $p_i>0$, the chain is irreducible and aperiodic, hence has a unique stationary distribution. Its formula was established in~\cite{Rivest76,Hendricks76} and is given by the following lemma.

\begin{lemma}[Stationary distribution~\cite{Rivest76,Hendricks76}]\label{lem:stationary}
The Transposition Markov chain is reversible with stationary distribution
\begin{equation}\label{eq:Q}
Q(\pi)\;=\;\frac{1}{\Z(p)}\prod_{i=1}^n p_i^{n-\pi(i)},
\qquad
\text{where }\;\Z(p)=\sum_{\sigma\in S_n}\prod_{i=1}^n p_i^{n-\sigma(i)}.
\end{equation}
\end{lemma}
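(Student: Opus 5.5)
The plan is to verify detailed balance directly. Since the chain is irreducible and aperiodic, a probability distribution satisfying detailed balance is the unique stationary distribution, and detailed balance also gives reversibility. Write $w(\pi)=\prod_{i=1}^n p_i^{n-\pi(i)}$. We must show that $w(\pi)P(\pi,\sigma)=w(\sigma)P(\sigma,\pi)$ for all $\pi\neq\sigma$.

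First identify the nonzero off-diagonal transitions. A request changes the state only if the requested item $j$ sits at some position $k>1$. In that case $j$ swaps with the item $i$ at position $k-1$. The transitions out of $\pi$ to a different state are therefore exactly the adjacent transpositions of positions $k-1,k$ for $k=2,\dots,n$. The transition to $\sigma=\pi\circ(\text{swap})$ happens with probability $p_j$, where $j=\pi^{-1}(k)$ is the item at the back position. Distinct $k$ give distinct $\sigma$, so there is no double counting. Conversely, from $\sigma$, in which $j$ is at position $k-1$ and $i$ at position $k$, the reverse move back to $\pi$ happens with probability $p_i$. Any pair $\pi\neq\sigma$ not related by an adjacent swap has $P=0$ in both directions, so detailed balance holds trivially there.

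Next compare the weights for an adjacent swap. Only the factors of $i$ and $j$ change. In $\pi$ we have $\pi(i)=k-1$ and $\pi(j)=k$, which contribute $p_i^{n-k+1}p_j^{n-k}$. In $\sigma$ the contribution is $p_i^{n-k}p_j^{n-k+1}$. Hence $w(\pi)/w(\sigma)=p_i/p_j$, and therefore
\[
w(\pi)P(\pi,\sigma)=w(\pi)p_j=w(\sigma)p_i=w(\sigma)P(\sigma,\pi).
\]
Normalizing by $\Z(p)$, which is positive because all $p_i>0$, gives a probability distribution $Q$ that satisfies detailed balance. Summing detailed balance over $\pi$ shows $QP=Q$. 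Combined with the uniqueness of the stationary distribution noted before the lemma, this completes the argument.

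The only step needing care is the bookkeeping of which item's probability drives each direction of the swap, since the back item is the one that must be requested. There is no real obstacle here; it is a routine verification.
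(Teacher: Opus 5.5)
Your proposal is correct and follows the same route as the paper: verify detailed balance for each adjacent swap, with the forward move driven by the request probability of the item in the back position and the reverse move by that of the item in front. You also carry out the exponent bookkeeping, $w(\pi)/w(\sigma)=p_i/p_j$, and the uniqueness argument that the paper leaves implicit in ``one can check directly.''
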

\begin{proof}
It suffices to verify that the probability distribution $Q$ satisfies detailed balance for every allowed transition.
Let $\pi$ be a permutation and let $k\in[n-1]$.
Write $a=\pi^{-1}(k)$ and $b=\pi^{-1}(k+1)$ for the two items in adjacent positions $k$ and $k+1$.
The chain can transition from $\pi$ to the permutation $\pi'$ obtained by swapping $a$ and $b$
exactly when the requested item is $b$, which occurs with probability $p_b$.
Conversely, from $\pi'$ the chain returns to $\pi$ exactly when the requested item is $a$, which occurs with probability $p_a$. One can check directly that $Q(\pi)p_b = Q(\pi')p_a$, proving detailed balance.
\end{proof}

An analogous proof shows that $Q$ also describes the stationary distribution of the gladiator chain. The only difference is that the transition probabilities from $\pi$ to $\pi'$ and vice versa are $\frac{1}{n-1}\frac{p_b}{p_a+p_b}$ and $\frac{1}{n-1}\frac{p_a}{p_a+p_b}$, respectively, and the extra divisors cancel.

\section{Excess Cost Decomposition}\label{sec:decomposition}

Let $\cost(\pi)=\sum_{i=1}^n p_i\,\pi(i)$ denote the expected access cost in state $\pi$.
Then the stationary expected cost of Transposition is
\[
\E_{\pi\sim Q}[\cost(\pi)]
=\sum_{\pi\in S_n}Q(\pi)\sum_{i=1}^n p_i\,\pi(i).
\]

The next lemma rewrites the \emph{excess} over $\OPT$ as a sum over inversions.

\begin{lemma}[Inversion decomposition]\label{lem:inversions}
For any distribution $p$ and any random permutation $\pi$,
\[
\E\big[\cost(\pi)\big]-\OPT(p)
\;=\;
\sum_{1\le i<j\le n} (p_i-p_j)\Prb[\pi(j)<\pi(i)].
\]
\end{lemma}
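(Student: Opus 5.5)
The plan is to prove the identity pointwise for every fixed permutation $\pi$ and then take expectations. The key observation is that the position of an item can be written as one plus the number of items ahead of it:
\[
\pi(i) \;=\; 1+\sum_{k\neq i}\1[\pi(k)<\pi(i)].
\]
This is what connects positions to pairwise inversion events.

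First, I substitute this into $\cost(\pi)=\sum_i p_i\pi(i)$. This gives $1+\sum_{i}\sum_{k\ne i} p_i\1[\pi(k)<\pi(i)]$, using $\sum_i p_i=1$.

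Next, I group the terms by unordered pairs $\{i,j\}$ with $i<j$. Exactly one of the two items precedes the other. Hence the pair contributes $p_i$ if $\pi(j)<\pi(i)$, and $p_j$ otherwise. In formula form, the pair contributes
\[
p_j+(p_i-p_j)\1[\pi(j)<\pi(i)].
\]

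Applying the same computation to the identity permutation shows that its cost is $1+\sum_{i<j}p_j$. This equals $\sum_j j\,p_j=\OPT(p)$, since each $p_j$ appears $j-1$ times in the double sum, plus once from the leading term.

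Subtracting the two expressions gives the deterministic identity
\[
\cost(\pi)-\OPT(p)=\sum_{i<j}(p_i-p_j)\1[\pi(j)<\pi(i)].
\]
Taking expectations over the random $\pi$ and using linearity turns each indicator into $\Prb[\pi(j)<\pi(i)]$. This yields the statement of Lemma~\ref{lem:inversions}.

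No step here is a real obstacle. The only point needing care is the pairwise bookkeeping: checking that each unordered pair contributes the cost of exactly the later item, and that the $\OPT$ terms match. Notably, the identity does not use the ordering $p_1\ge\dots\ge p_n$. That ordering only matters later, where it makes every term $(p_i-p_j)$ nonnegative.
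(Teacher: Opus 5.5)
Your proof is correct, but it takes a different route from the paper. The paper argues by bubble sort. It turns $\pi$ into the identity through adjacent swaps of inverted pairs, observes that each swap of items $i<j$ lowers $\cost$ by exactly $p_i-p_j$, and sums this over the inversions. You prove the same pointwise identity by direct computation instead. You write each rank as $1$ plus the number of predecessors, and you charge each unordered pair the weight of whichever item comes later.

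The paper's telescoping argument carries more intuition, since each inversion becomes a unit of ``repair work'' worth $p_i-p_j$. However, it leaves implicit the standard fact that bubble sort swaps each originally inverted pair exactly once and never swaps any other pair. Your computation needs no sorting process, so that bookkeeping is explicit and automatic. It also shows transparently that $\OPT(p)$ is the identity's cost.

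One small remark: you do not need $\sum_i p_i=1$. The term $\sum_i p_i$ appears in both $\cost(\pi)$ and $\OPT(p)$ and cancels. So your argument, like the paper's, works verbatim for arbitrary nonnegative weights.
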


\begin{proof}
Consider bubble-sorting a permutation $\pi$ into the identity order by swapping adjacent inverted pairs. Each adjacent swap that exchanges items $i<j$ with $\pi(j)<\pi(i)$ decreases $\cost(\pi)$ by exactly $p_i-p_j$ (because $i$ moves one step forward and $j$ one step backward).
Summing $p_i-p_j$ over all inversions yields the total decrease needed to reach the sorted order, which has cost $\OPT(p)$.
Taking expectation over $\pi$ yields the claim.
\end{proof}

Define for each $j\in\{2,\dots,n\}$ the quantity
\begin{equation}\label{eq:sj}
s_j
\;:=\;
\sum_{i=1}^{j-1}(p_i-p_j)\Prb[\pi(j)<\pi(i)]
\qquad\text{(under the stationary law $Q$).}
\end{equation}
Then Lemma~\ref{lem:inversions} implies
\begin{equation}\label{eq:excess-sum}
\E[\cost(\pi)]-\OPT(p)=\sum_{j=2}^n s_j.
\end{equation}
Intuitively, $s_j$ is the excess cost that can be blamed on item $j$ appearing too early in the permutation. 
We will show that $s_j\le p_j$ for all $j$. Theorem~\ref{thm:main} then follows immediately:
\[
\E[\cost(\pi)]-\OPT(p)=\sum_{j=2}^n s_j \le \sum_{j=2}^n p_j \le 1.
\]
The inequality $s_j\le p_j$ also directly yields Corollary~\ref{cor:gladiator}, since the expectation in Corollary~\ref{cor:gladiator} is precisely equal to $s_j$ when $\sum_i p_i=1$. The case where strengths do not sum to $1$ follows via linearity of expectation.

\section{Reduction to Coefficient Nonnegativity}\label{sec:polynomial}

Fix $j\in\{2,\dots,n\}$.
Using~\eqref{eq:Q} and~\eqref{eq:sj}, we can write
\[
s_j
=\frac{1}{\Z(p)}
\sum_{\pi\in S_n}
\left(\sum_{i=1}^{j-1}(p_i-p_j)\1_{\pi(j)<\pi(i)}\right)
\prod_{l=1}^n p_l^{n-\pi(l)}.
\]
Multiplying by $\Z(p)=\sum_{\pi\in S_n}\prod_{l=1}^n p_l^{n-\pi(l)}$, we see that $s_j\le p_j$ is equivalent to
\begin{equation}\label{eq:numerator-nonneg}
\sum_{\pi\in S_n}
\left(
p_j-\sum_{i=1}^{j-1}(p_i-p_j)\1_{\pi(j)<\pi(i)}
\right)
\prod_{l=1}^n p_l^{\,n-\pi(l)}
\;\ge\;0.
\end{equation}
Thus, our task reduces to proving nonnegativity of a polynomial.

\subsection{Gap Variables}

Introduce nonnegative ``gap'' variables
\[
x_i:=p_i-p_{i+1}\quad (1\le i\le n-1),
\qquad
x_n:=p_n.
\]
Then $p_i=\sum_{h=i}^n x_h$, and the inequalities $p_1\ge\cdots\ge p_n\ge 0$ translate to $x_i\ge 0$ for all $i$.

Under this change of variables, the left-hand side of~\eqref{eq:numerator-nonneg} can be expressed as a polynomial in $(x_1,\dots,x_n)$. The purpose of introducing the gap variables $x_1,\dots,x_n$ is to obtain a representation in which \emph{every coefficient is nonnegative}, implying the polynomial is nonnegative on $\mathbb{R}_{\ge 0}^n$.

\subsection{The Coefficient Formula}

To state the coefficient formula cleanly, define for any permutation $\pi\in S_n$ the polynomial
\[
F^\pi(x)\;:=\;\prod_{i=1}^n \left(\sum_{h=i}^n x_h\right)^{n-\pi(i)}.
\]
For a multi-index $c=(d_1,\dots,d_n)\in\mathbb{Z}_{\ge 0}^n$, let
\[
E^\pi(d)\;:=\;[x_1^{d_1}\cdots x_n^{d_n}]\,F^\pi(x)
\]
denote the coefficient of the monomial $x_1^{d_1}\cdots x_n^{d_n}$ in $F^\pi(x)$.
(If some $d_i<0$, we set $E^\pi(d)=0$.)

The next lemma is the bridge from $s_j\le p_j$ to the combinatorial injection.

\begin{lemma}[Coefficient formula]\label{lem:coeff}
After switching to gap variables, the coefficient of $x_1^{d_1}\dots x_n^{d_n}$ in the polynomial corresponding to~\eqref{eq:numerator-nonneg} is
\begin{equation}\label{eq:coeff-expression}
\sum_{\pi\in S_n}
\left(
\sum_{k=j}^n E^\pi(d-e_k) \;-\;
\sum_{k=1}^{j-1} E^\pi(d-e_k)\sum_{i=1}^k \1_{\pi(i)>\pi(j)}
\right),
\end{equation}
where $e_r$ denotes the $r$th standard basis vector. In particular, the polynomial is homogeneous of degree $\binom{n}{2}+1$.
\end{lemma}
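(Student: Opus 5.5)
The plan is to rewrite the integrand of~\eqref{eq:numerator-nonneg} entirely in gap variables and read off coefficients by linearity. The first step is to note that for each $\pi$ the product $\prod_{l} p_l^{n-\pi(l)}$ becomes exactly $F^\pi(x)$, since $p_l=\sum_{h=l}^n x_h$. Because $\pi$ is a bijection onto $[n]$, the exponents $n-\pi(l)$ range over $\{0,\dots,n-1\}$. Their sum is therefore $\binom{n}{2}$, so $F^\pi$ is homogeneous of degree $\binom n2$.

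The second step is to express the linear prefactor as a linear form in the $x_k$. We have $p_j=\sum_{k=j}^n x_k$, and $p_i-p_j=\sum_{k=i}^{j-1}x_k$ for $i<j$. Hence
\[
p_j-\sum_{i=1}^{j-1}(p_i-p_j)\1_{\pi(j)<\pi(i)}
=\sum_{k=j}^n x_k-\sum_{i=1}^{j-1}\1_{\pi(i)>\pi(j)}\sum_{k=i}^{j-1}x_k .
\]
Swapping the order of summation in the double sum over pairs $i\le k\le j-1$ regroups it as
\[
\sum_{k=1}^{j-1}x_k\sum_{i=1}^{k}\1_{\pi(i)>\pi(j)}.
\]
So the prefactor is $\sum_{k\ge j}x_k-\sum_{k<j}x_k\,c_k(\pi)$, where $c_k(\pi)=\sum_{i\le k}\1_{\pi(i)>\pi(j)}$.

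The third step is to extract coefficients. Multiplying $F^\pi$ by $x_k$ shifts exponents, so the coefficient of $x^d$ in $x_kF^\pi(x)$ is $E^\pi(d-e_k)$, and this is $0$ when $d_k=0$, consistent with the stated convention. Summing over $k$ and then over $\pi\in S_n$ by linearity gives exactly~\eqref{eq:coeff-expression}. Homogeneity of degree $\binom n2+1$ follows because each term is a linear form times a homogeneous polynomial of degree $\binom n2$.

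There is no real obstacle here; the only point needing care is the index bookkeeping in the double-sum swap. Specifically, one must check that the range $i\le k\le j-1$ produces the inner sum $\sum_{i=1}^k$ rather than $\sum_{i=k}^{j-1}$, and that the condition $\1_{\pi(j)<\pi(i)}$ is correctly restated as $\1_{\pi(i)>\pi(j)}$.
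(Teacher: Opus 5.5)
Your proposal is correct and matches the paper's proof: substitute $p_l=\sum_{h\ge l}x_h$ to get $F^\pi(x)$, and rewrite the prefactor using $p_i-p_j=\sum_{k=i}^{j-1}x_k$. Then swap the double sum to get $\sum_{k<j}x_k\sum_{i\le k}\1_{\pi(i)>\pi(j)}$, and read off coefficients, where multiplying by $x_k$ shifts the exponent vector by $e_k$; your homogeneity argument is the same as the paper's.
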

\begin{proof}
    Rewriting the polynomial in gap variables, it becomes
    \begin{align*}
        &\sum_{\pi\in S_n}
        \left(
        \sum_{k=j}^n x_k-\sum_{i=1}^{j-1}\1_{\pi(j)<\pi(i)}\sum_{k=i}^{j-1} x_k
        \right)
        \prod_{l=1}^n \left(\sum_{h=l}^{n} x_h\right)^{\,n-\pi(l)}\\
        =&\sum_{\pi\in S_n}
        \left(
        \sum_{k=j}^n x_k-\sum_{k=1}^{j-1} x_k\sum_{i=1}^k\1_{\pi(i)>\pi(j)}
        \right)
        F^\pi(x).
    \end{align*}
    The coefficient formula can be read off directly. Homogeneity of degree $\binom{n}{2}+1$ follows from the fact that $F^\pi(x)$ is homogeneous of degree $\sum_{i=1}^n (n-\pi(i))=\sum_{i=0}^{n-1} i = \binom{n}{2}$.
\end{proof}

The rest of the proof is coefficientwise: we show that for every degree vector $d$, the quantity~\eqref{eq:coeff-expression} is nonnegative. Since $x_i\ge 0$, this implies the entire polynomial is nonnegative, hence~\eqref{eq:numerator-nonneg}, hence $s_j\le p_j$.

\section{A Sign-Eliminating Injection}\label{sec:injection}

We now interpret the parts of~\eqref{eq:coeff-expression} as counts of combinatorial objects and construct an injection from the objects contributing with negative sign to those contributing with positive sign.

\subsection{Notation}
We write $\Sigma^m$ for the set of words of length $m$ over an alphabet $\Sigma$, and $\Sigma^*:=\bigcup_{m\ge 0}\Sigma^m$.  For words $u,v\in\Sigma^*$, we denote by $uv$ their concatenation. For $u\in\Sigma^*$ and $l\in\Sigma$, we denote by $\#_l(u)$ the number of occurrences of $l$ in $u$, and by $|u|$ the length of $u$. In the following, we will often use the alphabet $\Sigma_i:=\{i,i+1,\dots,n\}$ for some $i\in[n]$.

\subsection{\texorpdfstring{\boldmath Interpreting $E^\pi(\cdot)$ as Word-Tuples}{Interpreting E\^pi as Word-Tuples}}

Consider a fixed permutation $\pi\in S_n$. Expanding $\left(\sum_{l=i}^n x_l\right)^{n-\pi(i)}$ by the multinomial theorem, each monomial is of the form $\prod_{l=i}^n x_l^{\alpha_{l}}$ where $\alpha_l\in\mathbb Z_{\ge 0}$ with $\sum_{l=i}^n\alpha_l=n-\pi(i)$, and its coefficient is the multinomial coefficient 
\begin{align*}
    \binom{n-\pi(i)}{\alpha_i,\dots,\alpha_n}=\frac{(n-\pi(i))!}{\prod_{l=i}^n \alpha_l!}.
\end{align*}
This coefficient is the same as the number of words $w\in\Sigma_i^{n-\pi(i)}$ satisfying $\#_l(w)=\alpha_l$ for each $l\in\Sigma_i$. We can thus write
\begin{align*}
    \left(\sum_{l=i}^n x_l\right)^{n-\pi(i)} = \sum_{w\in\Sigma_i^{n-\pi(i)}} \prod_{l=1}^n x_l^{\#_{l}(w)}.
\end{align*}
Hence,
\begin{align*}
    F^\pi(x)&= \prod_{i=1}^n\sum_{w_i\in \Sigma_i^{n-\pi(i)}} \prod_{l=1}^n x_l^{\#_{l}(w_i)}\\
    &= \sum_{(w_1,\dots,w_n)\in\prod_{i}\Sigma_i^{n-\pi(i)}}\,\,\prod_{l=1}^n x_l^{\#_l(w_1\dots w_n)}
\end{align*}
Thus, $E^\pi(d)$ is the number of word tuples $(w_1,\dots,w_n)$ where for all $i\in[n]$,
\begin{itemize}
\item $w_i\in\Sigma_i^{n-\pi(i)}$,
\item $\#_l(w_1\dots w_n)=d_i$ for all $l\in[n]$.
\end{itemize}

\subsection{\texorpdfstring{\boldmath The Sets $A$ and $B$}{The Sets A and B}}

For the remainder of the proof, fix $d\in\mathbb{Z}_{\ge 0}^n$ with $\sum_i d_i=\binom{n}{2}+1$ and $j\in\{2,\dots,n\}$. We now define sets $A$ and $B$, such that $|A|$ is the sum of negative terms in~\eqref{eq:coeff-expression} and $|B|$ is the sum of positive terms in~\eqref{eq:coeff-expression}. As this involves terms of the form $\sum_{\pi\in S_n}E^\pi(\cdot)$, instead of considering a fixed $\pi$ we need to sum the above counts over all $\pi\in S_n$. This simply means that the previous constraint that $w_i$ has length $n-\pi(i)$ for each $i$ turns into the constraint that the lengths of $w_1,\dots,w_n$ are a permutation of $\{0,\dots,n-1\}$.

We define $B$ as the set of tuples $(w_1,\dots,w_n)\in(\Sigma_1^*)^n$ such that
{
\renewcommand{\theenumi}{(B\arabic{enumi})}
\renewcommand{\labelenumi}{\theenumi}
\begin{enumerate}[leftmargin=3.0em]
\item \label{it:alphabet} $\forall l\in[n]\colon w_l\in\Sigma_l^*$,
\item \label{it:lengths} the lengths $|w_1|,\dots,|w_n|$ are a permutation of $\{0,\dots,n-1\}$,
\item \label{it:BletterCounts} $\exists k\in\Sigma_j$ such that $\forall l\in[n]\colon\#_l(w_1\dots w_n)=d_l-\1_{l=k}$.
\end{enumerate}
}
Then
\[
|B|=\sum_{\pi\in S_n}\sum_{k=j}^n E^\pi(d-e_k).
\]
Elements of $A$ contain an integer parameter $i$ in addition to words $w_1,\dots,w_n$. Specifically, we define $A$ as the set of tuples $((w_1,\dots,w_n),i)\in(\Sigma_1^*)^n\times[j-1]$ such that
{
\renewcommand{\theenumi}{(A\arabic{enumi})}
\renewcommand{\labelenumi}{\theenumi}
\begin{enumerate}[leftmargin=3.0em]
\item \label{it:A1} $w_1,\dots,w_n$ satisy the same alphabet and distinct-length constraints \ref{it:alphabet} and \ref{it:lengths},
\item \label{it:i} $|w_i|<|w_j|$,
\item \label{it:A3} $\exists k\in[i,j-1]$ such that $\forall l\in[n]\colon\#_l(w_1\dots w_n)=d_l-\1_{l=k}$.
\end{enumerate}
}
One checks that
\[
|A|=\sum_{\pi\in S_n}\sum_{k=1}^{j-1} E^\pi(d-e_k)\sum_{i=1}^k \1_{\pi(i)>\pi(j)},
\]
where the extra parameter $i$ in the tuple along with the conditions $|w_i|<|w_j|$ and $k\ge i$ accounts for the sum over $i\in [k]$ with $\pi(i)>\pi(j)$.

Thus, proving nonnegativity of \eqref{eq:coeff-expression} reduces to proving $|A|\le |B|$. We show this by constructing an explicit injection $f\colon A\hookrightarrow B$.

\subsection{Construction of the Injection}

We describe an injection $f\colon A\hookrightarrow B$ by specifying an algorithm that transforms an input from $A$ into an output from $B$. We motivate and explain the design of the map $f$ in this section, provide a summary of its formal definition in Algorithm~\ref{alg:injection}, and an example in Example~\ref{ex:injection} and Figure~\ref{fig:example}. Sections~\ref{sec:welldefined} and \ref{sec:injective} prove that the map is indeed well-defined and injective.

By definition, an input from $A$ is of the form $((w_1,\dots,w_n),i)$, and has a unique associated ``deficit'' letter $k\in[i,j-1]$, i.e., satisfying $\#_l(w_1\cdots w_n)=d_l-\mathbf{1}_{l=k}$.
Write
\[
L:=|w_i|,
\qquad
U:=|w_j|,
\qquad
m:=U-L,
\]
and note that $m\ge 1$ by \ref{it:i}.
The map must output $(\tw_1,\dots,\tw_n)$ whose lengths are still a permutation of $\{0,\dots,n-1\}$ and whose total letter counts
have a unique deficit in $\Sigma_j=[n]\setminus[j-1]$ (instead of the input deficit $k\in[j-1]$).

The construction only modifies words whose lengths lie in the interval $[L,U]$. All other words are copied unchanged (lines~\ref{ln:forDefault}--\ref{ln:twlwl} of Algorithm~\ref{alg:injection}).

Injectivity requires recovering the extra parameter $i$ from the output. We will achieve this indirectly by encoding the length $L=|w_i|$ as the new length of the word $\tw_j$ in the output. To do this, we truncate $w_j$ to length $L$ by writing
\[
w_j=\tw_jb_1b_2\cdots b_m
\qquad
(\tw_j\in\Sigma_j^{L},\ b_t\in\Sigma_j).
\]
This makes $L$ recoverable simply as $|\tw_j|$. Consequently, if we can recover $(w_1,\dots,w_n)$, then we can also recover $i$ as the index of the word among them of length $L$.

We keep the removed letters $b_1,\dots,b_m$ as ``tokens'' that must (mostly) be reinserted elsewhere to preserve the multiset of letters. The guiding objective is to \emph{repair} the deficit at $k$ by inserting one copy of $k$ somewhere, while deliberately leaving \emph{one} of the token letters missing in the output. This shifts the unique deficit from a letter in $[j-1]$ (in $A$) to a letter in $\Sigma_j$ (as required in $B$).

It remains to specify how to modify the words of input lengths in $[L,U-1]$. Let $i_1,\dots,i_m$ be the indices of these words sorted by word length, i.e., such that $|w_{i_t}|=L+t-1$. The new output lengths of these words must cover the interval $[L+1,U]$ in order to restore the property that the multiset of all lengths equals $\{0,1,\dots,n-1\}$. We partition the indices $i_1,\dots,i_m$ into three groups:
\begin{itemize}
    \item \emph{Receiver} indices: $i_t\le k$
    \item \emph{Exchange} indices: $k<i_t<j$
    \item \emph{Tail} indices: $i_t>j$
\end{itemize}
This yields a correct partition since none of the indices $i_t$ equals $j$, as $|w_j|=U\notin[L,U-1]$.

\begin{itemize}
    \item The tail words (with index $i_t>j$) remain unchanged: $\tw_{i_t}:=w_{i_t}$.
    
    \item Each exchange word ($k<i_t<j$) also remains unchanged \emph{in length}, but we will replace its final letter $a_t$ by the token $b_t$. The alphabet constraints permit this letter exchange since $b_t\in\Sigma_j\subset \Sigma_{i_t}$. The purpose (as we will leverage below) is to ensure that both tail and exchange words end with a letter from $\Sigma_j$ in the output.

    \item Receiver words ($i_t\le k$) are the only words that grow in length. Concretely, every receiver word grows to the length of the next larger input receiver word, and the longest receiver word grows to length $U$. These extensions are achieved by inserting the original deficit letter $k$, letters $a_{t'}$ that were swapped out of exchange words, and other token letters $b_{t'}$.
\end{itemize}

The exact update process is specified in lines~\ref{ln:vgetsk}--\ref{ln:endmloop} of Algorithm~\ref{alg:injection}. The algorithm processes the indices from $t=m$ down to $1$ while maintaining a buffer string variable $v$, initialized to $v\gets k$. Intuitively, at any moment $v$ contains letters that have been ``dislodged'' from some already-processed word (or taken from the removed suffix of $w_j$) and are waiting to be inserted into a receiver word. A tail step $t$ prepends a token $b_t$ from the removed suffix to $v$. An exchange step $t$ prepends the swapped-out letter $a_t$ to $v$. A receiver step $t$ appends the entire buffer $v$ to the end of $w_{i_t}$, and resets the buffer to the single letter $b_t$.

Because $i_1=i\le k$, the final iteration $t=1$ is always a receiver step, so after the loop terminates we have $v=b_1$,
and this last letter is never inserted anywhere.  Thus the deficit letter moves from $k\in[j-1]$ in the input to $b_1\in\Sigma_j$ in the output.

The initial deficit letter $k$ starts inside $v$ and is inserted the first time a receiver step occurs, which creates the output word of length $U$. Every other word $\tw_{i_t}$ ends with a letter from $\Sigma_j$. Thus, we can identify $U$ from the output as the smallest length $>L$ whose word ends in a letter from $[j-1]$, and this letter must be $k$. This further allows to recover $m=U-L$, the index set $\{i_1,\dots,i_m\}$ as well as its partition into tail, exchange and receiver indices. As a result, it is possible to simulate the algorithm in reverse and uniquely determine the input $((w_1,\dots,w_n),i)$ from the output $(\tw_1,\dots,\tw_n)$.

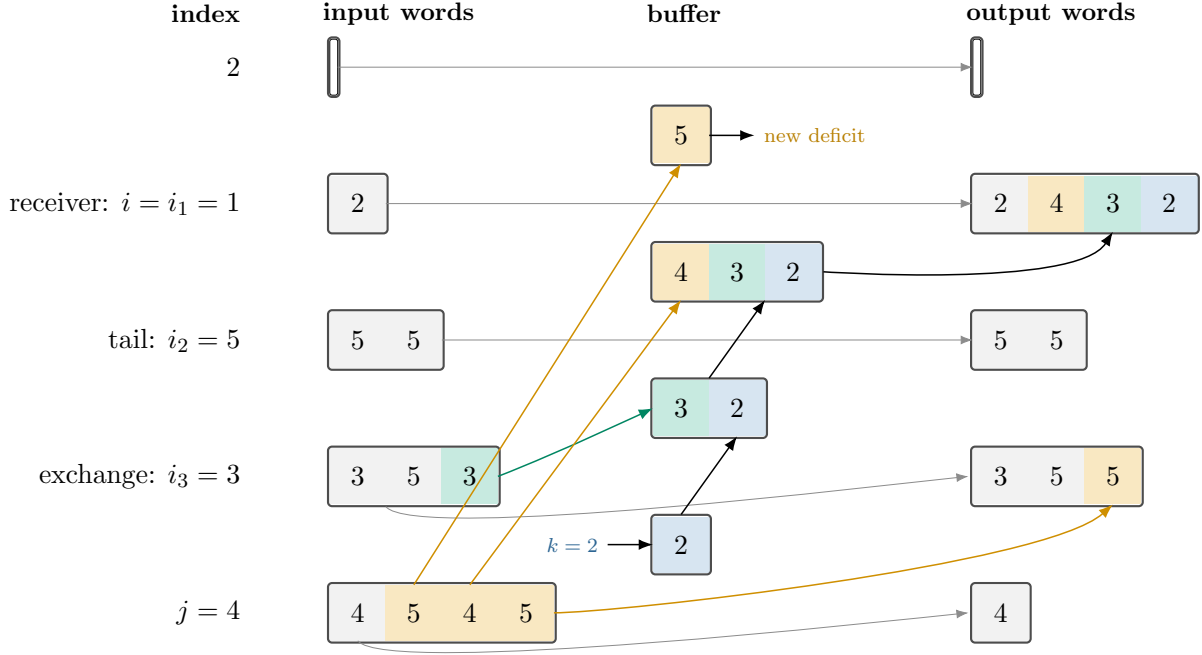
\begin{figure}[htb]
\resizebox{\columnwidth}{!}{

\begin{tikzpicture}[
  x=1cm,y=1cm,
  show background rectangle,
  background rectangle/.style={fill=white},
  slot/.style={minimum width=7.8mm, minimum height=7.8mm, inner sep=0pt, outer sep=0pt, font=\normalsize, draw=none},
  plain/.style={slot, fill=black!5},
  token/.style={slot, fill=tokorange!24},
  intro/.style={slot, fill=introblue!20},
  exchange/.style={slot, fill=exchgreen!22},
  emptyword/.style={draw, rounded corners=1pt, minimum width=1.1mm, minimum height=7.8mm, inner sep=0pt, outer sep=0pt},
  header/.style={font=\bfseries\small, align=center},
  rowlab/.style={font=\small, align=left, text width=0.3cm},
  smallnote/.style={font=\scriptsize, align=center},
  wordbox/.style={rounded corners=1.6pt, draw=black!70, line width=.85pt, inner sep=.7pt},
  tokenbrace/.style={decorate,decoration={brace,amplitude=4pt}},
  move/.style={-Latex, semithick},
  stay/.style={-Latex, thin, draw=black!45},
  emphasis/.style={draw=red!70!black, rounded corners=2pt, dashed, line width=.8pt, inner sep=1.2pt}
]

% column positions
\def\xrole{0.1}
\def\xidx{2.85}
\def\xin{4.35}
\def\xbuf{8.85}
\def\xout{13.3}
\def\dx{7.8mm}

% row positions (sorted by input length)
\def\yTwo{0.0}
\def\yOne{-1.9}
\def\yFive{-3.8}
\def\yThree{-5.7}
\def\yFour{-7.6}

% buffer states
\def\yBufStart{-6.65}
\def\yBufTwo{-4.75}
\def\yBufThree{-2.85}
\def\yBufFinal{-0.95}

% title and headers
\node[header,anchor=east] at (\xidx,0.75) {index};
\node[header] at (4.92,0.75) {input words};
\node[header] at (8.89,0.75) {buffer};
\node[header] at (14.00,0.75) {output words};

% role labels
% \node[rowlab] at (\xrole,\yOne) {receiver};
% \node[rowlab] at (\xrole,\yFive) {tail};
% \node[rowlab] at (\xrole,\yThree) {exchange};

% index labels
\foreach \yy/\lab in {\yTwo/$2$,\yOne/receiver: $i=i_1=1$,\yFive/tail: $i_2=5$,\yThree/exchange: $i_3=3$,\yFour/$j=4$}{
  \node[anchor=east] at (\xidx,\yy) {\lab};
}

% input words
\node[emptyword] (in2e) at ($(\xin,\yTwo)+(-3.35mm,0)$) {};
\node[plain] (in1a) at (\xin,\yOne) {$2$};
\node[plain] (in5a) at (\xin,\yFive) {$5$};
\node[plain] (in5b) at ($(in5a)+(\dx,0)$) {$5$};
\node[plain] (in3a) at (\xin,\yThree) {$3$};
\node[plain] (in3b) at ($(in3a)+(\dx,0)$) {$5$};
\node[exchange] (in3c) at ($(in3b)+(\dx,0)$) {$3$};
\node[plain] (in4a) at (\xin,\yFour) {$4$};
\node[token] (in4b) at ($(in4a)+(\dx,0)$) {$5$};
\node[token] (in4c) at ($(in4b)+(\dx,0)$) {$4$};
\node[token] (in4d) at ($(in4c)+(\dx,0)$) {$5$};

% output words (same row order as the index column)
\node[emptyword] (out2e) at ($(\xout,\yTwo)+(-3.35mm,0)$) {};
\node[plain] (out1a) at (\xout,\yOne) {$2$};
\node[token] (out1b) at ($(out1a)+(\dx,0)$) {$4$};
\node[exchange] (out1c) at ($(out1b)+(\dx,0)$) {$3$};
\node[intro] (out1d) at ($(out1c)+(\dx,0)$) {$2$};
\node[plain] (out5a) at (\xout,\yFive) {$5$};
\node[plain] (out5b) at ($(out5a)+(\dx,0)$) {$5$};
\node[plain] (out3a) at (\xout,\yThree) {$3$};
\node[plain] (out3b) at ($(out3a)+(\dx,0)$) {$5$};
\node[token] (out3c) at ($(out3b)+(\dx,0)$) {$5$};
\node[plain] (out4a) at (\xout,\yFour) {$4$};

% word outlines
\node[wordbox, fit=(in2e)] {};
\node[wordbox, fit=(in1a)] {};
\node[wordbox, fit=(in5a)(in5b)] {};
\node[wordbox, fit=(in3a)(in3c)] {};
\node[wordbox, fit=(in4a)(in4d)] {};
\node[wordbox, fit=(out2e)] {};
\node[wordbox, fit=(out1a)(out1d)] {};
\node[wordbox, fit=(out5a)(out5b)] {};
\node[wordbox, fit=(out3a)(out3c)] {};
\node[wordbox, fit=(out4a)] {};

% unchanged arrows
\draw[stay] (in2e.east) -- (out2e.west);
\draw[stay] ($(in1a.east)+(0,0)$) -- ($(out1a.west)+(-0,0)$);
\draw[stay] (in5b.east) -- (out5a.west);
\draw[stay] ($(in3a.south)!0.5!(in3b.south)+(0,-0.03)$) .. controls +(0.4,-0.5) and +(-1,-0.1) .. ($(out3a.west)+(-0.06,0)$);
\draw[stay] (in4a.south) .. controls +(0.4,-0.5) and +(-1,-0.1) .. ($(out4a.west)+(-0.06,0)$);

% buffer states
\node[intro] (buf0) at (\xbuf,\yBufStart) {$2$};
\node[smallnote, left=18pt of buf0, text=introblue!80!black, align=right] (klabel) {$k=2$};
\node[exchange] (buf1a) at (\xbuf,\yBufTwo) {$3$};
\node[intro]    (buf1b) at ($(buf1a)+(\dx,0)$) {$2$};
\node[token]    (buf2a) at (\xbuf,\yBufThree) {$4$};
\node[exchange] (buf2b) at ($(buf2a)+(\dx,0)$) {$3$};
\node[intro]    (buf2c) at ($(buf2b)+(\dx,0)$) {$2$};
\node[token] (buf3) at (\xbuf,\yBufFinal) {$5$};
\node[smallnote, right=18pt of buf3, text=tokorange!80!black, align=left] (newdeficitlabel) {new deficit};
\node[wordbox, fit=(buf0)] {};
\node[wordbox, fit=(buf1a)(buf1b)] {};
\node[wordbox, fit=(buf2a)(buf2c)] {};
\node[wordbox, fit=(buf3)] {};

% buffer progression arrows
\draw[move] (klabel.east) -- (buf0.west);
\draw[move] ($(buf0.north)+(0,0.03)$) -- (buf1b.south);
\draw[move] ($(buf1a.north)!0.5!(buf1b.north)+(0,0.03)$) -- ($(buf2b.south)!0.5!(buf2c.south)$);
\draw[move] (buf3.east) -- (newdeficitlabel.west);

% arrows from input to moved pieces / buffer
\draw[move, draw=exchgreen!85!black] (in3c.east) .. controls +(0.9,0.35) and +(0,0) .. (buf1a.west);
\draw[move, draw=tokorange!90!black] (in4d.east) .. controls +(0.85,0) and +(-0.95,-0.85) .. (out3c.south);
\draw[move, draw=tokorange!90!black] (in4c.north) -- (buf2a.south);
\draw[move, draw=tokorange!90!black] (in4b.north) -- (buf3.south);

% buffer appended to receiver word
\draw[move] ($(buf2c.east)+(0.03,0)$) .. controls +(0,0) and +(-0.5,-0.8) .. ($(out1b.south)!0.5!(out1d.south)$);
\end{tikzpicture}
}
\caption{Illustration of Example~\ref{ex:injection}. Rows are sorted by input word length.}\label{fig:example}
\end{figure}

\begin{example}\label{ex:injection}
    See Figure~\ref{fig:example} for an illustration. Consider $n=5$ with $d=(0,2,2,2,5)$, $j=4$, and input $((w_1,w_2,w_3,w_4,w_5),i)\in A$ with
    \begin{align*}
        w_1&=2\\
        w_2&=\epsilon\\
        w_3&=353\\
        w_4&=4545\\
        w_5&=55\\
        i&=1,
    \end{align*}
    where $\epsilon$ denotes the empty word. Properties \ref{it:A1}--\ref{it:A3} are satisfied with deficit letter $k=2$. We have $L=1$, $U=4$ and $m=3$. Since $w_2$ is the only word whose length does not fall in $[L,U]$, the algorithm sets
    \[\tw_2\gets w_2=\epsilon\] in lines~\ref{ln:forDefault}--\ref{ln:twlwl}. In line~\ref{ln:j} it sets \[\tw_4\gets 4\quad\text{ and }\quad(b_1,b_2,b_3)\gets(5,4,5).\]
    The buffer is initialized as $v\gets k=2$. The remaining words are processed in decreasing order of length:
    \begin{itemize}
        \item $i_3=3$ (exchange step) sets $\tw_3\gets 355$ and updates $v\gets32$.
        \item $i_2=5$ (tail step) sets $\tw_5\gets 55$ and updates $v\gets432$.
        \item $i_1=1$ (receiver step) sets $\tw_1\gets 2432$ and resets $v\gets 5$.
    \end{itemize}
    The output $(\tw_1,\tw_2,\tw_3,\tw_4,\tw_5)=(2432,\epsilon,355,4,55)$ satisfies \ref{it:alphabet}--\ref{it:BletterCounts} with new deficit letter $\tilde k=b_1=5$.
\end{example}

\begin{algorithm}[htb]
\caption{Sign-eliminating injection $f\colon A\hookrightarrow B$}
\label{alg:injection}
\begin{algorithmic}[1]
\Fixed $d\in\mathbb{Z}_{\ge 0}^n$ with $\sum_i d_i=\binom{n}{2}+1$ and $j\in\{2,\dots,n\}$
\Require $((w_1,\dots,w_n),i)\in A$
\Ensure $(\tw_1,\dots,\tw_n)\in B$
    \State Let $k\in[j-1]$ such that $\forall l\in[n]\colon\#_l(w_1\dots w_n)=d_l-\1_{l=k}$
    \State $L\gets |w_i|$ and $U\gets |w_j|$
    \State $m\gets U-L$\label{ln:m}
    \For{$l\in[n]$ with $|w_l|\notin[L,U]$}\label{ln:forDefault}
        \State $\tw_l\gets w_l$\label{ln:twlwl}
    \EndFor
    \State Let $\tw_j\in\Sigma_j^{L}$ and $b_1,\dots,b_m\in\Sigma_j$ such that $w_j=\tw_jb_1\dots b_m$\label{ln:j}
    \State Let $i_1,i_2,\dots,i_m$ such that $|w_{i_{t}}|=L+t-1$
    \State $v\gets k$\label{ln:vgetsk}
    \For{$t=m$ \textbf{down to} $1$}
        \If{$i_t>j$} \Comment{$i_t>j:$ Tail}
            \State $\tw_{i_t}\gets w_{i_t}$
            \State $v\gets b_t v$
        \ElsIf{$i_t>k$} \Comment{$k<i_t<j$: Exchange}
            \State Let $u\in\Sigma_{i_t}^*$ and $a_t\in\Sigma_{i_t}$ such that $w_{i_t}=ua_t$
            \State $\tw_{i_t}\gets u b_t$
            \State $v\gets a_t v$
        \Else\Comment{$i_t\le k$: Receiver}
            \State $\tw_{i_t}\gets w_{i_t}v$
            \State $v\gets b_t$
        \EndIf
    \EndFor\label{ln:endmloop}
    \State \Return $(\tw_1,\dots,\tw_n)$
\end{algorithmic}
\end{algorithm}

The remaining sections prove that the map is well-defined and injective.

\subsection{Well-Definedness}\label{sec:welldefined}

\begin{lemma}[Well-definedness of $f$]\label{lem:f-welldefined}
For every input $((w_1,\dots,w_n),i)\in A$, Algorithm~\ref{alg:injection} returns an output
$(\tw_1,\dots,\tw_n)\in B$.
\end{lemma}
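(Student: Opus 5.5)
We first check that every step of Algorithm~\ref{alg:injection} is executable, and then verify \ref{it:alphabet}, \ref{it:lengths} and \ref{it:BletterCounts} for the output in turn.

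\emph{Executability.} The deficit letter $k$ exists and is unique by \ref{it:A3}. Since $i\le k$, we have $k\in[i,j-1]$. Line~\ref{ln:j} makes sense because $|w_j|=U=L+m$ and $w_j\in\Sigma_j^*$ by \ref{it:alphabet}, so $\tw_j$ and all $b_t$ lie in $\Sigma_j$. By \ref{it:lengths}, each length $L,\dots,U-1$ is attained by exactly one word. This gives well-defined indices $i_1,\dots,i_m$ with $i_1=i$, all distinct from $j$. In an exchange step we need $|w_{i_t}|\ge1$. For $t\ge2$ this holds since the length is $L+t-1\ge1$. For $t=1$ the index is $i_1=i\le k$, which is a receiver step, so no exchange occurs there. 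Each word is assigned exactly once: words of length outside $[L,U]$ in line~\ref{ln:twlwl}, $w_j$ in line~\ref{ln:j}, and the $w_{i_t}$ in the loop. These cover all indices.

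\emph{Alphabet constraint \ref{it:alphabet}.} Tail words and words outside $[L,U]$ are unchanged. We have $\tw_j$ a prefix of $w_j$. An exchange word $\tw_{i_t}=ub_t$ has $b_t\in\Sigma_j\subseteq\Sigma_{i_t}$ since $i_t<j$. For receivers we use the invariant that, at the start of each iteration, the buffer $v$ consists of letters from $\Sigma_k$. Initially $v=k$. Tokens $b_t$ lie in $\Sigma_j\subseteq\Sigma_k$. A swapped-out letter $a_t$ lies in $\Sigma_{i_t}\subseteq\Sigma_{k+1}$ since $i_t>k$. A receiver $i_t\le k$ then has $\Sigma_k\subseteq\Sigma_{i_t}$, so appending $v$ is legal.

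\emph{Length constraint \ref{it:lengths}.} This is the main bookkeeping step. Outside $[L,U]$ nothing changes, and $\tw_j$ now has length $L$. Tail and exchange words keep their lengths $L+t-1$ for $t\ge2$. For receivers we track $|v|$. Let $t$ be a receiver step and $t'>t$ the next receiver step, or $t'=m+1$ if none exists. When step $t$ is reached, $v$ has been set either to $b_{t'}$ or initially to $k$, so it has one letter. Each of the non-receiver steps $t'-1,\dots,t+1$ prepends one letter. Hence $|v|=t'-t$ at step $t$, and the new length is $L+t-1+t'-t=L+t'-1$. So receiver lengths shift to the next receiver's old length, and the top receiver reaches $U$. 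Together with the unchanged non-receiver lengths in $[L+1,U-1]$ and $\tw_j$ at length $L$, the lengths in $[L,U]$ are again a permutation of $[L,U]$, giving \ref{it:lengths}.

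\emph{Letter counts \ref{it:BletterCounts}.} Track the multiset of letters in the words together with the buffer. Initially this is the input multiset plus one $k$, i.e.\ exactly $d$. Truncating $w_j$ moves $b_1,\dots,b_m$ into a pool. Each subsequent operation only moves letters between words, buffer and pool:
\begin{itemize}
\item a tail step moves $b_t$ from the pool into $v$;
\item an exchange step swaps $a_t$ into $v$ and $b_t$ into a word;
\item a receiver step empties $v$ into a word and moves $b_t$ into $v$.
\end{itemize}
At the end the pool is empty and $v=b_1$, because $t=1$ is a receiver step. Hence the output letter counts equal $d-e_{b_1}$ with $b_1\in\Sigma_j$, which is \ref{it:BletterCounts}.

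The only delicate point is the buffer-length induction for the receiver lengths; everything else is direct verification.
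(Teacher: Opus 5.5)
Your proof is correct and follows essentially the same route as the paper: the invariant $v\in\Sigma_k^*$ for \ref{it:alphabet}, counting the buffer length between consecutive receiver steps for \ref{it:lengths}, and conservation of letters with $b_1$ left in the buffer for \ref{it:BletterCounts}. Your explicit executability checks are small additions the paper leaves implicit, for example that exchange words are nonempty because $t=1$ is always a receiver step. The same goes for your words-plus-buffer-plus-pool invariant.
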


\begin{proof}
Fix an input $((w_1,\dots,w_n),i)\in A$ and let $(\tw_1,\dots,\tw_n)$ be the output.

\paragraph{\ref{it:alphabet} Alphabet constraints.}
Every word $\tw_l$ must lie in $\Sigma_l^*$.
Words with $|w_l|\notin[L,U]$ are unchanged, so the constraint holds for them.
Also, the word $\tw_j$ is a prefix of $w_j$, hence $\tw_j\in\Sigma_j^*$.

Now consider $l=i_t$.
If $i_t>j$, then $\tw_{i_t}=w_{i_t}\in\Sigma_{i_t}^*$.
If $k<i_t<j$, we set $\tw_{i_t}=u b_t$ with $u\in\Sigma_{i_t}^*$ and $b_t\in\Sigma_j$. Since $j> i_t$, also $b_t\in\Sigma_{i_t}$.
If $i_t\le k$, then $\tw_{i_t}=w_{i_t}v$. Every letter ever placed into $v$ is $\ge k\ge i_t$. Thus, $v\in\Sigma_{i_t}^*$ and $\tw_{i_t}\in\Sigma_{i_t}^*$.

\paragraph{\ref{it:lengths} Word lengths.}
All lengths outside the interval $[L,U]$ are unchanged.
It suffices to show that the multiset of output lengths among indices whose \emph{input} lengths lie in $[L,U]$
is exactly $\{L,L{+}1,\dots,U\}$.

Let $R:=\{t\in[m]: i_t\le k\}$ be the set of receiver time steps.
Since $i_1=i\le k$, we have $1\in R$.
Write the receiver times in decreasing order as
\[
r_1>r_2>\cdots>r_s=1,\qquad\text{and set}\qquad r_0:=m+1.
\]
Between two consecutive receiver times, the buffer length increases by exactly $1$ at each step
(because both an exchange step and a tail step prepend a single letter to $v$),
and at each receiver time it is reset to length $1$.
Therefore, at time $t=r_q$ the buffer has length $|v|=r_{q-1}-r_q$.
Hence, the receiver word $w_{i_{r_q}}$ (whose input length is $L+r_q-1$) grows to output length
\[
|\tw_{i_{r_q}}|
=(L+r_q-1) + (r_{q-1}-r_q)
= L+r_{q-1}-1.
\]
All non-receiver indices $i_t$ keep their length $L+t-1$, and $\tw_j$ has length $L$.
Thus the output lengths among $\{j,i_1,\dots,i_m\}$ are
\[
\{L\}\ \cup\ \{L+t-1:\ t\in[m]\setminus R\}\ \cup\ \{L+r_{q-1}-1:\ q=1,\dots,s\}
=\{L,L+1,\dots,U\}.
\]
Therefore, the multiset of lengths remains a permutation of $\{0,1,\dots,n-1\}$.

\paragraph{\ref{it:BletterCounts} Letter counts.}
In the input, the count of letter $l$ equals $d_l-\mathbf{1}_{l=k}$.
Truncating $w_j$ removes the letters $b_1,\dots,b_m$, which are later inserted in either the buffer or another word. In the loop,
\begin{itemize}
\item an exchange step replaces one letter $a_t$ by $b_t$ in a word and prepends $a_t$ to the buffer;
\item a tail step prepends $b_t$ to the buffer, without changing any word;
\item a receiver step appends the entire current buffer $v$ to a word, and then resets $v$ to the single letter $b_t$.
\end{itemize}
Thus, apart from the initial truncation of $w_j$, letters are never deleted: they are either moved between words and the buffer,
or inserted from the removed suffix letters $b_t$ by placing them into words (exchange) or the buffer (tail/receiver).

Because $t=1$ is a receiver step, after the last iteration we have $v=b_1$ and the algorithm terminates,
so $b_1$ is never appended to any word.  Every other removed letter $b_t$ with $t\ge 2$ is inserted exactly once:
either immediately as the new last letter of an exchange word, or by being carried in the buffer and appended at the next receiver step.
Moreover, the initial deficit letter $k$ is inserted exactly once: it starts in $v$ and is appended at the first receiver step encountered
when scanning $t=m$ down to $1$.

Therefore, the deficit at $k$ is repaired and the unique deficit letter in the output is $b_1\in\Sigma_j$:
\[
\#_l(\tw_1\cdots\tw_n)=d_l-\mathbf{1}_{l=b_1}.
\]
Hence $(\tw_1,\dots,\tw_n)\in B$.
\end{proof}

\subsection{Proof of Injectivity}\label{sec:injective}

\begin{lemma}[Injectivity of $f$]\label{lem:f-injective}
The map $f\colon A\to B$ defined by Algorithm~\ref{alg:injection} is injective.
\end{lemma}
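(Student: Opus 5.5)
We will construct an explicit left inverse: given an output $(\tw_1,\dots,\tw_n)=f((w_1,\dots,w_n),i)$, we recover successively $k$, $L$, $U$, the indices $i_1,\dots,i_m$ with their roles, and finally all input words and $i$. Since every quantity is determined by the output alone, two inputs with the same image must coincide.

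\emph{Step 1: recover $L$, $U$, $m$ and $k$.} First, $L=|\tw_j|$, since $\tw_j$ is the length-$L$ prefix of $w_j$. For $U$, we use the structure established in the proof of Lemma~\ref{lem:f-welldefined}. The output words whose lengths lie in $[L+1,U]$ are exactly the words $\tw_{i_t}$ for $t\ge 2$, together with the receiver words. Each of these ends in a letter of $\Sigma_j$, with one exception: the receiver word processed first (at the largest receiver time $r_1$), which has output length $L+r_0-1=U$ and ends in $k$. Indeed, tail words have $i_t>j$ and so lie in $\Sigma_{j}^*$. Exchange words end in $b_t\in\Sigma_j$. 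A later receiver word ends with the last letter of its buffer, which is $b_{r_{q-1}}\in\Sigma_j$. One must also check that the word of length $U$ is nonempty; this holds since $U\ge1$.

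Here is the subtle point. Words of length $>U$ are unchanged and may end in letters of $[j-1]$. So we cannot simply take ``the first length $>L$ whose word ends in $[j-1]$'' unless every length in $(L,U)$ has its word ending in $\Sigma_j$. That is exactly what was just verified. We therefore define $U$ as the smallest $\ell>L$ such that the output word of length $\ell$ ends in a letter of $[j-1]$, and we read off $k$ as that letter. Then $m=U-L$.

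\emph{Step 2: recover roles and reverse the loop.} The output words with lengths in $[L,U]$, other than $\tw_j$, are the words indexed by $\{i_1,\dots,i_m\}$. Their roles are determined by the index, via the comparisons $i_t>j$, $k<i_t<j$ and $i_t\le k$, since $k$ and $j$ are now known. For a tail or exchange index the length is preserved, so its time is $t=|\tw_{i_t}|-L+1$. The receiver indices keep their relative order, because each receiver grows to the length of the next receiver, or to $U$ for the longest. Sorting the receiver indices by output length therefore reproduces their order, and their input times are obtained by filling the time slots in $[m]$ not used by tail or exchange indices, in increasing order. This recovers the full sequence $i_1,\dots,i_m$ and in particular $i=i_1$.

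We then simulate the loop in reverse, from $t=1$ up to $t=m$, maintaining the buffer. We start with $v=b_1$ unknown, but $b_1$ is fixed by the output's deficit: it is the unique letter with count $d_l-1$.

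Concretely, at each receiver time $r_q$ the appended buffer has the known length $r_{q-1}-r_q$. Stripping that many letters from the end of $\tw_{i_{r_q}}$ recovers $w_{i_{r_q}}$, together with the buffer contents $v$. The buffer letters are, in order, the dislodged letters from the intermediate steps: $a_t$ for exchange steps and $b_t$ for tail steps, followed by $b_{r_{q-1}}$, or by $k$ when $q=1$. From these we read off $b_t$ for each tail step and $a_t$ for each exchange step. Each exchange word is then restored as $w_{i_t}=u\,a_t$, where $\tw_{i_t}=u\,b_t$, and its last letter gives $b_t$. Every $b_t$ is obtained in this way, and so we recover $w_j=\tw_jb_1\cdots b_m$. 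All other words are unchanged.

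The main obstacle is Step 1, specifically the claim that every output word of length strictly between $L$ and $U$ ends in $\Sigma_j$. This is what makes $U$ identifiable, and it requires the careful case analysis above over tail, exchange and later-receiver words. Once this is established, the remaining reversal is bookkeeping.
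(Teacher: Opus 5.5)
Your proposal is correct and follows essentially the same route as the paper. You recover $L=|\tw_j|$, identify $U$ and $k$ as the first length above $L$ whose word ends in a letter of $[j-1]$ (after checking that all intermediate lengths end in $\Sigma_j$), classify indices by comparison with $k$ and $j$, order receivers by output length, read $b_1$ off the output deficit, and undo the loop; the only cosmetic differences are that you assign receiver times as the complement of the tail/exchange times and strip each appended buffer using its known length $r_{q-1}-r_q$, whereas the paper chains original lengths via $l_{\mathrm{orig}}$ and runs a single reversed buffer.
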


\begin{proof}
We describe a procedure to uniquely recover the input from any output of $f$.
Fix an output $(\tw_1,\dots,\tw_n)\in f(A)$.

\paragraph{\boldmath Step 1: recover $b_1$, $L$, $U$, $m$ and $k$.}
Since $\sum_l d_l=\binom{n}{2}+1$, while $\sum_l \#_l(\tw_1\cdots\tw_n)=\binom{n}{2}$,
there is a unique letter $b_1$ such that $\#_{b_1}(\tw_1\cdots\tw_n)=d_{b_1}-1$.
This is exactly the deficit letter of the output, and by the reasoning in the preceding proof, it must equal the letter $b_1$ removed from $w_j$.

Let $L:=|\tw_j|$.
Define $U$ to be the \emph{smallest} integer $>L$ such that the (unique) word of length $U$ ends with a letter in $[j-1]$,
and let $k$ be that last letter.
For outputs in the image of $f$, such a $U$ exists and is unique:
indeed, the first receiver step in the forward loop produces a word of length $U=|w_j|$ ending with the inserted letter $k\in[j-1]$,
while for every length $l$ with $L<l<U$ the unique word of length $l$ ends with a letter in $\Sigma_j$
(exchange words end with $b_t\ge j$, tail words end with letters $>j$, and later receiver words end with $b_t\ge j$).
Thus the above scanning rule recovers the original $U$ and $k$.
Set $m:=U-L$.

\paragraph{\boldmath Step 2: recover $i$ and $i_1,\dots,i_m$.}
Let
\[
I:=\{l\in[n]:\ L\le |\tw_l|\le U\}.
\]
For outputs in $f(A)$ we have $I=\{j,i_1,\dots,i_m\}$.

Define the receiver index set
\[
R:=\{l\in I:\ l\le k\}.
\]
This is correct because in the forward algorithm the receiver condition is exactly $i_t\le k$,
while the other two cases satisfy $i_t>k$.

Order the receiver indices by increasing output length:
\[
R=\{r_1,r_2,\dots,r_s\}
\quad\text{such that}\quad
|\tw_{r_1}|<|\tw_{r_2}|<\dots<|\tw_{r_s}|.
\]
In the forward execution, $r_1=i_1=i$ (the receiver in the final step $t=1$) and $r_s$ is the first receiver encountered when scanning downward,
whose output length equals $U$.  In particular, the input index is recovered as
\[
i:=r_1.
\]

Next, define an \emph{original length} function $l_{\mathrm{orig}}:I\to\{L,\dots,U\}$ by
\[
l_{\mathrm{orig}}(r_1):=L,\qquad
l_{\mathrm{orig}}(r_{q+1}):=|\tw_{r_q}|\,\,\text{ for }q=1,\dots,s-1,\qquad
l_{\mathrm{orig}}(j):=U,
\]
and for $l\in I\setminus(R\cup\{j\})$ (i.e., exchange or tail indices) set $l_{\mathrm{orig}}(l):=|\tw_l|$.

Then for each $t\in[m]$, we recover $i_t$ as the unique element of $I\setminus\{j\}$ satisfying
\[
l_{\mathrm{orig}}(i_t)=L+t-1.
\]

\paragraph{\boldmath Step 3: reverse the loop and recover $w_{i_t}$ and $b_t$.}
We now reconstruct the words $w_{i_1},\dots,w_{i_m}$ and the removed letters $b_1,\dots,b_m$ by reversing the loop of Algorithm~\ref{alg:injection}.

Initialize a buffer string $v\gets b_1$; this equals the forward buffer after finishing the last iteration $t=1$.
For $t=1,2,\dots,m$ do:
\begin{itemize}
\item If $i_t\le k$ (receiver), then the forward step set $v\gets b_t$ and produced
$\tw_{i_t}=w_{i_t}v_{\mathrm{before}}$, where $|w_{i_t}|=L+t-1$ and $v_{\mathrm{before}}$ was the previous buffer.
In reverse we set $b_t\gets v$ (a single letter), split $\tw_{i_t}$ into prefix and suffix as
\[
\ \tw_{i_t} = w_{i_t}v_{\mathrm{before}},
\qquad |w_{i_t}|=L+t-1,
\]
and then update $v\gets v_{\mathrm{before}}$.
\item If $i_t>j$ (tail), then the forward step left the word unchanged and updated $v\gets b_t v_{\mathrm{before}}$.
In reverse we set $w_{i_t}\gets \tw_{i_t}$, let $b_t$ be the first letter of the current buffer $v$, and delete it from $v$.
\item Otherwise ($k<i_t<j$, exchange), the forward step wrote $w_{i_t}=u a_t$ and $\tw_{i_t}=u b_t$ and updated
$v\gets a_t v_{\mathrm{before}}$.  In reverse we set $b_t$ to be the last letter of $\tw_{i_t}$, let $u$ be the remaining prefix,
let $a_t$ be the first letter of $v$, delete it from $v$, and then set $w_{i_t}\gets u a_t$.
\end{itemize}
After finishing $t=m$, the buffer must equal $v=k$ (matching the forward initialization).

\paragraph{\boldmath Step 4: recover $w_j$ and all remaining words.}
Set
\[
w_j \gets \tw_jb_1b_2\dots b_m.
\]
For any index $l$ with $|\tw_l|\notin[L,U]$, the forward algorithm left the word unchanged, so we set $w_l\gets\tw_l$.
This reconstructs a unique candidate preimage $((w_1,\dots,w_n),i)$.

\paragraph{Step 5: conclude injectivity.}
By construction, running Algorithm~\ref{alg:injection} on the reconstructed $((w_1,\dots,w_n),i)$ returns exactly
$(\tw_1,\dots,\tw_n)$, since each reverse step undoes the corresponding forward step. All parameters $L,U,m,k,i_1,\dots,i_m,b_1,\dots,b_m$ and the input $((w_1,\dots,w_n),i)\in A$
have been reconstructed uniquely.
Therefore, every element of $f(A)$ has a unique preimage, and $f$ is injective.
\end{proof}

\section{Conclusion and Future Directions}\label{sec:conclusion}

We proved that Transposition achieves stationary expected cost at most $\OPT(p)+1$ for every i.i.d. access distribution. Thus, the simplest swap-only self-organizing heuristic essentially matches the performance of knowing $p$. Our result subsumes the multiplicative $1+o(1)$ guarantee previously known for the Zipfian case and asymptotically settles Rivest's conjecture despite the known 6-item counterexample to exact optimality. Beyond list update, it implies a memoryless rule for approximately sorting probabilities given sampling access, and a quantitative guarantee on the ranking achieved by the gladiator chain.

We conjecture that an upper bound of $\OPT+O(1)$ holds already after poly$(n)$ steps starting from any initial list order. This is consistent with Fill's conjecture~\cite{Fill03} that the gladiator chain has polynomial mixing time. Although the Transposition Markov chain has unbounded mixing time when $p_{n-1}\to 0$ (because all but the least likely item need to be requested at least once for full mixing), achieving the cost bound does not require full mixing, as the relative order of items of very lower probability has negligible impact on the cost.

Another intriguing question is whether similar additive bounds are achievable for other data structures in the i.i.d. model, such as self-organizing binary search trees (BSTs). A result of this type was obtained in \cite{GolinILMN18}, but the algorithm uses significant extra memory.

\paragraph{AI Disclosure.} This research was assisted by the large language model GPT-5 Pro, which materially influenced the proof strategy as described at the end of Section~\ref{sec:proofOverview}. GPT-5 was also used to assist with drafting portions of the manuscript throughout the paper, all of which have been revised by the \authors.

\ifanonymous
\else
\paragraph{Acknowledgments.} Funded by the European Union (ERC, CCOO, 101165139). Views and opinions expressed are however those of the author(s) only and do not necessarily reflect those of the European Union or the European Research Council. Neither the European Union nor the granting authority can be held responsible for them.

I thank Sandeep Silwal for discussions about Rivest’s conjecture at the Dagstuhl Seminar 25471, ``Online Algorithms beyond Competitive Analysis''.
\fi

% \newpage
\bibliographystyle{alpha}
\bibliography{references}

\end{document}